\newcommand{\nc}{\newcommand}
\newcommand{\DMO}{\DeclareMathOperator}
\DeclareMathAlphabet\mathbfcal{OMS}{cmsy}{b}{n}
\newcommand{\silly}{\hspace*{1em}}
\newcommand{\Procedure}[2]{\STATE {{\bf procedure} {\sc #1}}} 
\newcommand{\State}{\STATE \silly}
\newcommand{\Return}{{\bf return \/}}
\newcommand{\EndProcedure}{}
\newcommand{\EndIf}{}
\nc{\MS}{\mathcal{S}}
\nc{\MP}{\mathcal{P}}
\nc{\MR}{\mathcal{R}}
\nc{\cM}{\mathcal{M}}
\nc{\cS}{\mathcal{S}}
\nc{\cI}{\mathcal{I}}
\nc{\cA}{\mathcal{A}}
\nc{\tcA}{\tilde{\cA}}
\nc{\MZ}{\mathcal{Z}}
\DMO{\Binom}{Binom}
\newcommand{\E}{\mathbb{E}}
\DMO{\Var}{Var}
\newcommand{\bh}{\mathbf{h}}
\newcommand{\tbh}{\tilde{\bh}}
\newcommand{\bX}{\mathbf{X}}
\nc{\tbx}{\tilde{\bx}}
\nc{\tbX}{\tilde{\bX}}
\nc{\tZ}{\tilde{Z}}
\nc{\tz}{\tilde{z}}
\newcommand{\bU}{\mathbf{U}}
\nc{\tbU}{\tilde{\bU}}
\newcommand{\bT}{\mathbf{T}}
\nc{\tbT}{\tilde{\bT}}
\newcommand{\bD}{\mathbf{D}}
\nc{\tbD}{\tilde{\bD}}
\newcommand{\bx}{\mathbf{x}}
\newcommand{\R}{\mathbb{R}}
\nc{\BN}{\mathbb{N}}
\newcommand{\Z}{\mathbb{Z}}
\nc{\BZ}{\mathbb{Z}}
\newcommand{\bone}{\mathbf{1}}
\newcommand{\bA}{\mathbf{A}}
\nc{\tbA}{\tilde{\bA}}
\renewcommand{\th}{\tilde{h}}
\newcommand{\cD}{\mathcal{D}}
\newcommand{\cDcentral}{\cD^{\mathrm{noise}}}
\newcommand{\cDflood}{\cD^{\mathrm{flood}}}
\newcommand{\inp}{\mathrm{input}}
\newcommand{\cF}{\mathcal{F}}
\DeclareMathOperator{\supp}{supp}
\DeclareMathOperator{\DLap}{DLap}
\DeclareMathOperator{\Poi}{Poi}
\DeclareMathOperator{\NB}{NB}
\DeclareMathOperator{\Geo}{Geo}
\newcommand{\eps}{\epsilon}
\newcommand{\tO}{\widetilde{O}}
\newcommand{\cDinp}[1]{\cD^{\inp, #1}}
\newcommand{\DP}{\mathrm{DP}}
\newcommand{\typeOfDP}[1]{\DP_{\mathrm{ #1}}}
\newcommand{\shuffledDP}{\typeOfDP{shuffle}}
\newcommand{\randomizer}{\textsc{CorrNoiseRandomizer}}
\newcommand{\analyzer}{\textsc{CorrNoiseAnalyzer}}
\newtheorem{theorem}{Theorem}
\newtheorem{lemma}[theorem]{Lemma}
\newtheorem{definition}[theorem]{Definition}
\newtheorem{condition}[theorem]{Condition}
\newtheorem{corollary}[theorem]{Corollary}
\title{Pure-DP Aggregation in the Shuffle Model:\\
Error-Optimal and Communication-Efficient}
\author{
Badih Ghazi\thanks{Google Research, Mountain View. Email: \texttt{badihghazi@gmail.com}.}
\and
Ravi Kumar\thanks{Google Research, Mountain View. Email: \texttt{ravi.k53@gmail.com}.}
\and
Pasin Manurangsi\thanks{Google Research, Thailand. Email: \texttt{pasin@google.com}.}
}
\date{\today}
\begin{document}

\maketitle

\begin{abstract}

% We give new algorithms for counting and real summation in the $\eps$-$\shuffledDP$ model with error $O(1/\eps)$ and expected communication complexity of $\tO\left(\frac{\log n}{\eps}\right)$ and $\tO\left(\frac{\log^3 n}{\eps}\right)$ bits, respectively. Previously, algorithms with similar privacy and error guarantees require $\Omega(n^{2.5})$ bits of communication~\cite{CY21}.

% ALTERNATIVE:
We obtain a new protocol for binary counting in the $\eps$-$\shuffledDP$ model with error $O(1/\eps)$ and expected communication  $\tO\left(\frac{\log n}{\eps}\right)$ messages per user.
Previous protocols incur either an error of $O(1/\epsilon^{1.5})$ with  $O_\eps(\log{n})$ messages per user (Ghazi et al., ITC 2020) or an error of $O(1/\epsilon)$ with $O_\eps(n^{2.5})$  messages per user (Cheu and Yan, TPDP 2022).  Using the new protocol, we obtained improved $\eps$-$\shuffledDP$ protocols for real summation and histograms.
\end{abstract}

\section{Introduction}

% \todo{Add introduction for DP \& Shuffle DP}

% Useful references:
% \begin{itemize}
% \item DP:
% \begin{itemize}
%     \item General \cite{dwork2006calibrating}
%     \item Discrete Laplace \cite{GhoshRS09,GengV16}
% \end{itemize}
% \item Shuffle DP:
% \begin{itemize}
%     \item General \cite{CheuSUZZ19,erlingsson2019amplification}
%     \item Aggregation \cite{BalleBGN19,balle_merged,ghazi2019private,GGKMPV20,GKMP20,GMPS21,GhaziKM21,balcer2019separating}
%     \item Histogram \cite{anon-power,balcer2019separating}
% \end{itemize} 
% \end{itemize}

Differential privacy (DP)~\cite{dwork2006calibrating} is a widely accepted notion used for bounding and quantifying an algorithm’s  leakage of personal information.  Its most basic form, known as \emph{pure}-DP, is governed by a single parameter $\epsilon > 0$, which bounds the leakage of the algorithm. Specifically, a randomized algorithm $A(\cdot)$ is said to be \emph{$\epsilon$-DP} if for any subset $S$ of output values, and for any two datasets $D$ and $D'$ differing on a single user’s data, it holds that $\Pr[A(D) \in S] \le e^{\epsilon} \cdot \Pr[A(D') \in S]$.
In settings where pure-DP is not (known to be) possible, a common relaxation is the so-called \emph{approximate} $(\epsilon, \delta)$-DP \cite{dwork2006our}, which has an additional parameter $\delta \in [0,1]$. In this case, the condition becomes: $\Pr[A(D) \in S] \le e^{\epsilon} \cdot \Pr[A(D') \in S] + \delta$.

Depending on the trust assumptions, three models of DP are commonly studied. The first is the \emph{central} model, where a trusted curator is assumed to hold the raw data and required to release a private output. (This goes back to the first work \cite{dwork2006calibrating} on DP.)
The second is the \emph{local} model \cite{evfimievski2003limiting, dwork2006calibrating, kasiviswanathan2008what}, where each user’s message is required to be private.
The third is the \emph{shuffle} model \cite{bittau17, CheuSUZZ19,erlingsson2019amplification}, where the users’ messages are routed through a trusted shuffler, which is assumed to be non-colluding with the curator, and which is expected to randomly permute the messages incoming from the different users ($\shuffledDP$).  More formally, a protocol $P = (R, S, A)$ in the shuffle model consists of $3$ procedures: (i) a local randomizer $R(\cdot)$ that takes as input the data of a single user and outputs one or more messages, (ii) a shuffler $S(\cdot)$ that randomly permutes the messages from all the local randomizers, and (iii) an analyst $A(\cdot)$ that consumes the permuted output of the shuffler; the output of the protocol $P$ is the output of the analyst $A(\cdot)$.  Privacy in the shuffle model is defined as follows:
\begin{definition}[\cite{CheuSUZZ19,erlingsson2019amplification}] 
A protocol $P = (R, S, A)$ is said to be $(\epsilon, \delta)$-$\shuffledDP$ if for any input dataset $D = (x_1, \dots, x_n)$ where $n$ is the number of users, it holds that $S(R(x_1), \dots, R(x_n))$ is $(\epsilon, \delta)$-DP. In the particular case where $\delta = 0$, the protocol $P$ is said to be $\epsilon$-$\shuffledDP$.
\end{definition}

For several analytics tasks, low-error algorithms are known in the central model, whereas they are known to be impossible in the local model. For such tasks, low-error algorithms are commonly sought in the shuffle model, since it is more preferable to trust a shuffler than a central curator.

\subsection{Our Contributions}
\label{sec:our-con}

In the \emph{binary summation} (aka \emph{counting}) problem, each user $i$ receives an input $x_i \in \{0, 1\}$ and the goal is to estimate $\sum_{i \in [n]} x_i$.  For this well-studied task, the discrete Laplace mechanism is known to achieve the optimal (expected absolute) error of $O(1/\epsilon)$ for $\epsilon$-DP summation in the central model \cite{GhoshRS09,GengV16}. Note that this error is independent of the number $n$ of users, and is an absolute constant for the common parameter regime where $\epsilon = O(1)$.  In contrast, the error of any aggregation protocol in the local model is known to be at least on the order of $\sqrt{n}$ \cite{beimel2008distributed, ChanSS12}. There have been many works studied aggregation in the $\shuffledDP$ setting including \cite{BalleBGN19,balle_merged,ghazi2019private,GGKMPV20,GKMP20,GMPS21,GhaziKM21,balcer2019separating}. For pure-DP aggregation, it is known that any single-message protocol (where each user sends a single message to the shuffler) should incur error $\Omega_{\epsilon}(\sqrt{n})$ \cite{balcer2019separating}. For multi-message protocols, where each user can send multiple messages to the shuffler, the best known protocols incur either an error of $O(1/\epsilon^{1.5})$ with  $O(\log{n})$ messages per user~\cite{GGKMPV20} or an error of $O(1/\epsilon)$ with $O(n^{2.5})$  messages per user~\cite{CY21}. No  protocol simultaneously achieved error $O(1/\epsilon)$ and communication $O(\log{n})$. 

We obtain an $\eps$-$\shuffledDP$ algorithm for binary summation, where each user, in expectation, sends $O\left(\frac{\log n}{\eps}\right)$ one-bit messages; this answers the main open question for this basic aggregation task.

\begin{theorem} \label{thm:main-counting}
For every positive real number $\eps \leq O(1)$, there is a (non-interactive) $\eps$-$\shuffledDP$ protocol for binary summation with  RMSE $O(1/\eps)$, where each user sends $O\left(\frac{\log n}{\eps}\right)$ messages in expectation and each message consists of a single bit.
\end{theorem}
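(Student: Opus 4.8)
The plan is to realize the central discrete‑Laplace mechanism inside the shuffle model by having the users \emph{collectively} generate the noise, exploiting the infinite divisibility of $\DLap$. Writing $q=e^{-\eps'}$ for a suitable $\eps'=\Theta(\eps)$, we have $\DLap(\eps')\stackrel{d}{=}\NB(1,q)-\NB(1,q)$ and $\NB(1,q)\stackrel{d}{=}\sum_{i=1}^n\NB(1/n,q)$, so user $i$ can locally sample $A_i,B_i\sim\NB(1/n,q)$. In the naive version the user emits a bit for $x_i$, then $A_i$ ``one''-bits and $B_i$ ``zero''-bits (a one-bit read as a $+1$ token, a zero-bit as a $-1$ token), and the analyst returns (number of one-bits) $-$ (number of zero-bits), which is exactly $\sum_i x_i+\DLap(\eps')$ in distribution. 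This already gives $\mathrm{RMSE}=O(1/\eps)$ from $\Var(\DLap(\eps'))=O(1/\eps^2)$, and the naive expected per-user communication is only $O(1+1/(n\eps))$ bits.

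The two real difficulties are (i) controlling the per-user message count robustly enough for the stated $O(\log n/\eps)$ bound, and, more importantly, (ii) getting \emph{pure} (not approximate) $\eps$-$\shuffledDP$. For (i), $\NB(1/n,q)$ has a geometric-type tail, so capping each user's contribution at $T=\Theta(\log n/\eps)$ tokens changes the transcript only on a $1/\mathrm{poly}(n)$ event; since that alone would only give approximate DP, the cap must be implemented by \emph{redistributing} the truncated mass so that the aggregated noise still satisfies the exact likelihood-ratio bound $\mu(k)/\mu(k+1)\in[e^{-\eps},e^{\eps}]$ everywhere on its support. For (ii), the subtlety is that the shuffler reveals the whole multiset of bits, i.e.\ the pair $(c_1,c_0)=(\#\text{ones},\#\text{zeros})$, not merely $c_1-c_0$; in the naive scheme $c_1=\sum_i x_i+(\text{nonnegative noise})$, so observing a small $c_1$ would rule out large $\sum_i x_i$ and break pure DP outright. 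The fix is to make the information beyond $c_1-c_0$ — essentially the total number of messages $c_1+c_0$ — \emph{input-independent} (every user always sends one bit encoding $x_i$, and the noise is padded with canceling $+1/-1$ pairs, which is also what pushes the expected communication up to $O(\log n/\eps)$), and then to show that, conditioned on $c_1+c_0$, the difference $c_1-c_0$ is $\sum_i x_i$ shifted by a range-restricted discrete-Laplace-type noise, so the joint view is pure-$\eps$-DP in $\sum_i x_i$. Making the conditional law of $c_1-c_0$ given $c_1+c_0$ have the required heavy, two-sided exponential tail — rather than the light binomial tail one gets from independent per-user padding — is exactly what forces the noise to be \emph{correlated} across users (the \textsc{CorrNoise} construction).

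Assembling the proof then has three parts: (a) analyze the \textsc{CorrNoise} noise distribution, showing it is a capped, infinitely-divisible, pure-$\eps$-DP mechanism with variance $O(1/\eps^2)$; (b) show the shuffled transcript of \randomizer{} is a post-processing of (equivalently, is stochastically dominated, in the $\eps$-DP sense, by) the output of the central mechanism \centralalgm{}, transferring the pure-DP guarantee to the shuffle protocol; and (c) bound the per-user communication by a tail/union-bound estimate, which also handles the rare events created by capping and by atypically large noise.

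The step I expect to be the main obstacle is (ii): obtaining \emph{exact} pure DP while the adversary sees the full message multiset. Approximate-DP shuffle protocols can tolerate ``the bad event has tiny probability,'' but pure DP forbids any output reachable under one dataset and unreachable under a neighbor, which forces the noise added to each observable count to have full support on $\mathbb{Z}$ with the correct two-sided exponential decay — a requirement that provably cannot be met by a sum of bounded i.i.d.\ per-user contributions (too light-tailed, forcing variance $\gg 1/\eps^2$) and therefore genuinely needs the correlated construction. Verifying that this correlated construction simultaneously meets the likelihood-ratio bound, the $O(1/\eps^2)$ variance bound, and the $O(\log n/\eps)$ expected-communication bound is the technical heart of the argument.
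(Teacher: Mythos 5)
You correctly identify the high-level framework (infinitely divisible noise distributed across users, the analyst reporting a count difference, padding with canceling $+1/-1$ pairs) and, importantly, the crux difficulty: the shuffler reveals the whole multiset $(c_1,c_0)$, not just $c_1-c_0$, so a scheme in which $c_1$ is deterministically bounded below by $\sum_i x_i$ cannot be pure DP. But the remedies you propose diverge from the paper's and, as stated, do not close the gap.

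First, making $c_1+c_0$ input-independent does not fix the support problem: even with $c_1+c_0$ fixed, observing $c_1 < \#\{i : x_i = 1\}$ is still impossible in your scheme, so neighboring datasets still have different supports for the view. Second, you propose capping each user's message count at $T=\Theta(\log n/\eps)$ and then ``redistributing'' the truncated mass. The paper does the opposite: it keeps the per-user message counts \emph{unbounded} (each user independently samples from $\Geo(1-e^{-\eps'})_{/n}$ and $\Poi(\lambda)_{/n}$), which is precisely how the aggregated noise retains the full discrete-Laplace tail needed for pure DP; capping is explicitly singled out as an open problem in the conclusion. Third, you argue the construction ``genuinely needs'' cross-user correlated noise; the paper's noise is fully i.i.d.\ across users (the only ``correlation'' is within a user, where the flooding count $z^{\pm 1}_i$ is reused for both the $+1$ and $-1$ side). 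Unbounded i.i.d.\ infinitely divisible contributions are enough.

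What you are missing is the paper's actual fix, which is entirely in the \emph{input-dependent} step: with probability $q \approx \rho/(\eps^2 n)$ the user sends nothing, and with the remaining probability $1-q$ the user sends $s+x_i$ copies of $+1$ and $s$ copies of $-1$ with $s = \Theta_\eps(\log n)$. The event of sending nothing equalizes the supports of the two neighbors, and the $\pm s$ padding pushes the input-dependent mass ``deep inside'' the support of the Poisson flooding distribution, so that the two ratio lemmas (one on the geometric tails, one on $f_{\Poi(\lambda)}(i+s)/f_{\Poi(\lambda)}(i)$) give a pointwise $e^{\pm\eps}$ bound on the likelihood ratio after convolving everything via \Cref{lem:conv-reduce-div}. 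Without the random ``send nothing'' branch and the $s$-shift, the privacy proof does not go through, and none of capping, redistribution, or cross-user correlation substitutes for it.
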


In fact, similar to the protocol of~\cite{CY21}, our protocol can get an error that is arbitrarily close to that of the discrete Laplace mechanism, which is known to be optimal in the central model. We defer the formal statement to \Cref{thm:main-counting-nearly-opt-error}.

Before we continue, we note that while the expected number of messages 
in \Cref{thm:main-counting} is small (and with an exponential tail), the \emph{worst} case number of messages is unbounded. This should be contrasted with an  $\Omega_\eps(\sqrt{\log n})$ lower bound in \cite{GGKMPV20} that only applies to the worst case number of bits sent by a user. We discuss this further in \Cref{sec:conclusion}. 

\paragraph{Protocols for Real Summation and Histogram.}

Using known techniques (e.g.,~\cite{CheuSUZZ19,GGKMPV20}), we immediately get the following consequences for real summation and histogram. 

In the \emph{real summation} problem, each $x_i$ is a real value in $[0, 1]$; the goal is again to estimate the sum $\sum_{i \in [n]} x_i$.  The protocol in~\cite{GGKMPV20} achieves an expected RMSE of $\tO(1/\eps^{1.5})$; here, each user sends $O_{\eps}(\log^3 n)$ messages each of length $O(\log \log n)$ bits.
By running their protocol bit-by-bit with an appropriate privacy budget split, we get an algorithm with an improved, and asymptotically optimal, error of $O(1/\eps)$ while with expected communication similar to theirs.

\begin{corollary}
For every positive real number $\epsilon \leq O(1)$, there is a (non-interactive) $\eps$-$\shuffledDP$ protocol for real summation with RMSE $O(1 / \eps)$, where each user sends $O(\frac{\log^3 n}{\eps})$ messages in expectation and each message consists of $O(\log\log n)$ bits.
\end{corollary}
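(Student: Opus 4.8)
The plan is to reduce $\eps$-$\shuffledDP$ real summation to $\ell := \lceil \log_2 n\rceil$ independent instances of the binary summation protocol of \Cref{thm:main-counting}, one per bit of precision, combined via an appropriately chosen privacy-budget split. First, each user randomly rounds its input $x_i \in [0,1]$ to a value $\tilde x_i$ that is a multiple of $2^{-\ell}$, with $\E[\tilde x_i] = x_i$ and per-user variance at most $2^{-2\ell}/4$; writing the rounded value in binary as $\tilde x_i = \sum_{j=1}^{\ell} 2^{-j} b_{i,j}$ with $b_{i,j} \in \{0,1\}$, we have $\sum_{i} x_i \approx \sum_i \tilde x_i = \sum_{j=1}^{\ell} 2^{-j} S_j$ where $S_j := \sum_i b_{i,j}$, and the rounding contributes only $\sqrt{\sum_i \Var(\tilde x_i)} \le \frac{1}{2\sqrt n} = O(1/\eps)$ to the RMSE (using $\eps \le O(1)$). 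For each $j \in [\ell]$ I run the $\eps_j$-$\shuffledDP$ protocol of \Cref{thm:main-counting} on $(b_{1,j},\dots,b_{n,j})$ to obtain $\hat S_j$ with $\sqrt{\E[(\hat S_j - S_j)^2]} = O(1/\eps_j)$ using $O(\log n / \eps_j)$ single-bit messages per user in expectation; I tag every message with its index $j$ (an extra $\lceil \log_2 \ell \rceil = O(\log\log n)$ bits), apply one shuffler to the union of all messages, and have the analyst output $\sum_{j=1}^{\ell} 2^{-j} \hat S_j$.

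For privacy, each binary sub-protocol is $\eps_j$-$\shuffledDP$ with respect to an \emph{arbitrary} change of one user's binary input (any two bits are neighbours), so, since the index tags are public, composition of pure DP implies that the joint shuffled output is $(\sum_j \eps_j)$-DP under an arbitrary change of any single user's bit-vector $(b_{i,1},\dots,b_{i,\ell}) \in \{0,1\}^\ell$; since this bit-vector is a (randomized) function of that user's $x_i$ alone, and reweighting a family of distributions that are pairwise within a factor $e^{\eps}$ leaves the mixture within $e^{\eps}$ of any member, the whole protocol is $(\sum_j \eps_j)$-$\shuffledDP$, and I set $\sum_j \eps_j = \eps$. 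For accuracy, Minkowski's inequality (the $L^2$ triangle inequality applied to the independent sub-protocol errors) gives RMSE $\le \sum_{j=1}^{\ell} 2^{-j} \cdot O(1/\eps_j) + \frac{1}{2\sqrt n}$.

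It remains to pick the split, and this is the one genuinely delicate step: an $O(1/\eps)$ error forces $\eps_j = \Omega(\eps)$ on each of the top few bits (else a single high-order bit dominates the error), whereas the $O(\log^3 n/\eps)$ communication bound forces $\eps_j = \Omega(\eps/\log n)$ on \emph{every} bit (else a low-order bit's $O(\log n/\eps_j)$ messages blow up). Both can be met by taking $\eps_j \propto \eps \cdot \max(2^{-j/2},\, 1/\log n)$, normalized so that $\sum_{j=1}^{\ell} \eps_j = \eps$ (the normalizing factor is $\Theta(1)$, since $\sum_{j=1}^{\ell} \max(2^{-j/2}, 1/\log n) = O(1)$): then $\sum_j 2^{-j}/\eps_j \le O(1/\eps)\sum_j 2^{-j/2} = O(1/\eps)$, while $\sum_j O(\log n/\eps_j) \le \ell \cdot O(\log^2 n / \eps) = O(\log^3 n / \eps)$, each such message having $O(\log\log n)$ bits; non-interactivity is inherited from \Cref{thm:main-counting}. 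Thus the only obstacle beyond routine bookkeeping is exhibiting a budget split that simultaneously controls the weighted error sum and the total communication; everything else follows from \Cref{thm:main-counting} together with standard pure-DP composition and post-processing.
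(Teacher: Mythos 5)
Your proof is correct and follows essentially the same approach the paper references: a bit-by-bit reduction (from \cite{GGKMPV20}) in which each bit level runs the binary counting protocol of \Cref{thm:main-counting} with its own privacy budget, combined via pure-DP composition and a weighted sum with a small randomized-rounding term. The paper states this only as a citation to ``known techniques'' with ``an appropriate privacy budget split'' and does not spell out the split; your choice $\eps_j \propto \eps \cdot \max(2^{-j/2},\ 1/\log n)$, with normalization constant $\Theta(1)$, is a valid instantiation that simultaneously yields $\sum_j 2^{-j}/\eps_j = O(1/\eps)$ and $\sum_j \log n/\eps_j = O(\log^3 n/\eps)$, matching the claimed bounds.
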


A widely used primitive related, though not identical, to aggregation is histogram computation. 
In the \emph{histogram} problem, each $x_i$ is a number in $[B]$; the goal is to estimate the histogram of the dataset, where the histogram $\bh \in \Z_{\geq 0}^B$ is defined by $h_b = |\{i \in [n] \mid x_i = b\}|$. The error of an estimated histogram $\tbh$ is usually measured in the $\ell_\infty$ sense, i.e., $\|\tbh - \bh\|_{\infty} = \max_{b \in [B]} |h_b - \th_b|$.

For this task, which has been studied  in several papers including \cite{anon-power,balcer2019separating}, the best known pure-$\shuffledDP$ protocol achieved $\ell_{\infty}$-error $O\left(\frac{\log{B} \log{n}}{\eps^{1.5}}\right)$ and communication $O\left(\frac{B\log{n} \log{B}}{\eps}\right)$ bits.  By running the our $(\eps/2)$-$\shuffledDP$ protocol separately for each bucket~\cite[Appendix A]{GGKMPV20}, we immediately arrive at the following:

\begin{corollary}
For every positive real number $\eps \leq O(1)$, there is a (non-interactive) $\eps$-$\shuffledDP$ protocol that computes histograms on domains of size $B$ with an expected $\ell_{\infty}$-error of at most $O\left(\frac{\log{B} \log{n}}{\eps}\right)$, where each user sends $O\left(\frac{B\log{n}}{\eps}\right)$ messages in expectation and each message consists of $O(\log{B})$ bits.
\end{corollary}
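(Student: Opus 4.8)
The plan is to invoke, essentially as a black box, the reduction from histogram computation to binary summation given in \cite[Appendix A]{GGKMPV20}, instantiated with the $\eps$-$\shuffledDP$ counting protocol of \Cref{thm:main-counting} in place of the protocol used there. Concretely, the histogram protocol lets user $i$, for \emph{each} bucket $b \in [B]$, run the binary-summation randomizer of \Cref{thm:main-counting} with privacy parameter $\eps/2$ on the indicator bit $\ind[x_i = b]$, and send all resulting messages, each tagged with its bucket index $b$ (an extra $O(\log B)$ bits). The analyst groups the shuffled messages by tag and runs, for each bucket, the corresponding analyzer to obtain $\th_b$. This is non-interactive since \Cref{thm:main-counting} is.

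For privacy, note that changing a single user's value from $x_i = b$ to $x_i = b'$ alters the bit-vector handed to the $b$-th copy at one coordinate ($1 \to 0$), alters the one handed to the $b'$-th copy at one coordinate ($0 \to 1$), and leaves the inputs of all other $B-2$ copies unchanged. Since the copies use independent randomness, the law of the (tag-partitioned) shuffled transcript is a product over the $B$ buckets, so the likelihood ratio between two neighbors is a product of at most two factors, each at most $e^{\eps/2}$ by the $(\eps/2)$-$\shuffledDP$ guarantee of \Cref{thm:main-counting}; hence the composed protocol is $\eps$-$\shuffledDP$. For communication, \Cref{thm:main-counting} sends $O(\log n/\eps)$ one-bit messages per user in expectation, so here each user sends $O(B\log n/\eps)$ messages in expectation, now of $O(\log B)$ bits each.

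It remains to bound the expected $\ell_\infty$-error, which is the only nonroutine step. \Cref{thm:main-counting} gives per-bucket RMSE $O(1/\eps)$; what the union bound over $B$ buckets actually needs is a tail bound of the form $\Pr[|\th_b - h_b| > t] \le \exp\!\bigl(-\Omega(\eps t/\log n)\bigr)$, which one extracts from the structure of the correlated-noise protocol exactly as in \cite[Appendix A]{GGKMPV20} --- the $\log n$ here being the same slack that makes their histogram bound $O\!\bigl(\tfrac{\log B \log n}{\eps^{1.5}}\bigr)$ rather than $O\!\bigl(\tfrac{\log B}{\eps^{1.5}}\bigr)$. Given such a tail bound, a union bound gives $\|\tbh - \bh\|_\infty = O\!\bigl(\tfrac{\log B \log n}{\eps}\bigr)$ except with probability at most, say, $1/4$, and integrating the tail over $t$ turns this into the claimed bound on $\E\|\tbh - \bh\|_\infty$. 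I expect the only (minor) obstacle to be formalizing this per-bucket tail bound with the right dependence on $n$ and $\eps$ and verifying that it is inherited from \Cref{thm:main-counting}; everything else is the reduction of \cite[Appendix A]{GGKMPV20} applied mechanically.
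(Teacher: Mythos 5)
Your proposal matches the paper's own (one-sentence) proof, which simply runs the $(\eps/2)$-$\shuffledDP$ counting protocol independently for each bucket and invokes the reduction of \cite[Appendix A]{GGKMPV20} for the privacy/composition, communication, and $\ell_\infty$-error accounting. One small remark: since the dominant per-bucket noise in the new protocol is discrete Laplace, the per-bucket tail is in fact $\exp(-\Omega(\eps t))$ rather than $\exp(-\Omega(\eps t/\log n))$, so the $\log n$ factor in the stated bound is, as you suspected, slack inherited from the cited analysis rather than something intrinsic to this protocol.
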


%For completeness, we briefly explain how each of these corollaries are proved in \Cref{app:cor}.

\subsection{Technical Overview}

We will now briefly discuss the proof of \Cref{thm:main-counting}. Surprisingly, we show that a simple modification of the algorithm from \cite{GKMP20} satisfies pure-DP! To understand the modification and its necessity, it is first important to understand the algorithm in~\cite{GKMP20}. In their protocol, the messages are either $+1$ or $-1$, and the analyzer's output is simply the sum of all messages. There are three type of messages each user sends:
%Moreover, i Then, regardless of the input, the users also send additional +1, -1 messages. There are two types of such additional messages:
\begin{itemize}
\item \emph{Input-Dependent Messages}: If the input $x_i$ is 1, the user sends a +1 message. Otherwise, the user does not send anything.
\item \emph{Flooding Messages}: These are messages that do \emph{not} affect the final estimation error. In particular, a random variable $z^{\pm 1}_i$ is drawn from an appropriate distribution and the user sends  $z^{\pm 1}_i$ additional copies of $-1$ and $z^{\pm 1}_i$ additional copies of $+1$. These messages get canceled  when the analyzer computes it output.
\item \emph{Noise Messages}: These are the messages that affect the error in the end. Specifically, $z^{+1}_i, z^{-1}_i$ are drawn i.i.d. from an appropriate distribution, and $z^{- 1}_i$ additional copies of $-1$ and $z^{+ 1}_i$ additional copies of $+1$ are then sent.
\end{itemize}

We note here that the view of the analyzer is simply the number of +1 messages and the number of $-1$ messages, which we will denote by $V_{+1}$ and $V_{-1}$ respectively. 

While~\cite{GKMP20} shows that this protocol is $(\eps, \delta)$-DP, it is easy to show that this is \emph{not} $\eps$-DP for any finite $\eps$. Indeed, consider two neighboring datasets where $X$ consists of all zeros and $X'$ consists of a single one and $n - 1$ zeros. There is a non-zero probability that $V_{+1}(X) = 0$, while $V_{+1}(X')$ is always non-zero (because of the input-dependent message from the user holding the single one). 

To fix this, we randomize this ``input-dependent'' part.  With probability $q$, the user sends nothing.  With the remaining probability $1 - q$, (instead of sending a single +1 for $x_i = 1$ as in~\cite{GKMP20},) the user sends $s + 1$ copies of +1 and $s$ copies of $-1$; similarly, the user sends $s$ copies of +1 and $s$ copies of $-1$ in the case $x_i = 0$. By setting $q$ to be sufficiently small (e.g., $q = O(\nicefrac{1}{\eps n})$), it can be shown that the error remains roughly the same as before. Furthermore, when $s$ is sufficiently large (i.e., $O_\eps(\log n)$), we manage to show that this algorithm satisfies $\eps$-$\shuffledDP$. While the exact reason for this pure-DP guarantee is rather technical, the general idea is similar to~\cite{GGKMPV20}: by making the ``border'' part of the support equal in probabilities in the two cases, we avoid the issues presented above.  Furthermore, by making  $s$ sufficiently large, the input-dependent probability is ``sufficiently inside'' of the support that it usually does not completely dominate the contribution from the outer part.

%In~\cite{GKMP20}, the protocol always sends a single +1 if $x = 1$ and nothing if $x = 0$ in this step. Instead, we randomize this step, by always sends nothing with a certain probability. For the remaining probability, instead of sending a single +1 for $x = 1$, we send $s + 1$ copies of +1 and $s$ copies of $-1$; similarly, we send $s$ copies of +1 and $s$ copies of $-1$ in the case $x = 0$.
%While the exact reason for doing this is rather technical, the general idea is similar to~\cite{GGKMPV20}: by making the ``outer'' part equal in probabilities in the two cases, we avoid the issues presented in single-message case and, by making the $s$ sufficiently large (i.e. $O_\eps(\log n)$), the input-dependent probability is ``sufficiently inside'' of the support that it usually does not completely dominate the contribution from the outer part.

Finally, note that $V_{+1}, V_{-1}$ involves summation of many i.i.d. random variables $\sum_{i \in [n]} z^{\pm 1}_i$, $\sum_{i \in [n]} z^{+ 1}_i$, and $\sum_{i \in [n]} z^{- 1}_i$. As observed in~\cite{GKMP20}, it is convenient to use \emph{infinitely divisible} distributions so that these sums have distributions that are independent of  $n$, allowing for simpler calculations. We inherit this feature from their analysis.

\section{Preliminaries}

For a discrete distribution $\cD$, let $f_{\cD}$ denote its probability mass function (PMF). The \emph{max-divergence} between distributions $\cD_1, \cD_2$ is defined as $d_{\infty}(\cD_1 \| \cD_2) := \max_{x \in \supp(\cD_1)} \ln(f_{\cD_1}(x) / f_{\cD_2}(x))$.

For two distributions $\cD_1, \cD_2$ over $\Z^d$, we write $\cD_1 * \cD_2$ to denote its \emph{convolution}, i.e., the distribution of $z_1 + z_2$ where $z_1 \sim \cD_1, z_2 \sim \cD_2$ are independent. Moreover, let $(\cD)^{* n}$ denote the $n$-fold convolution of $\cD$, i.e., the distribution of $z_1 + \cdots + z_n$ where $z_1, \dots, z_n \sim \cD$ are independent. We write $\cD \otimes \cD'$ to denote the \emph{product} distribution of $\cD_1, \cD_2$. Furthermore, we may write a value to denote the distribution all of whose probability mass is at that value (e.g., 0 stands for the probability distribution that is always equal to zero).

A distribution $\cD$ is \emph{infinitely divisible} iff, for every positive integer $n$, there exists a distribution $\cD_{/n}$ such that $(\cD_{/n})^{* n}$. %the sum of $n$ i.i.d. random variables sampled from $\cD_{/n}$ is distributed as $\cD$. 
Two distributions we will use here (both supported on $\Z_{\geq 0}$) are:
\begin{itemize}
\item \emph{Poisson Distribution} $\Poi(\lambda)$: This is the distribution whose PMF is $f_{\Poi(\lambda)}(k) = \lambda^k e^{-\lambda} / k!$. It satisfies $\Poi(\lambda)_{/n} = \Poi(\lambda/n)$.
\item \emph{Negative Binomial Distribution} $\NB(r, p)$: This  is the distribution whose PMF is $f_{\NB(r, p)}(k) = \binom{k+r-1}{k} p^r(1 - p)^k$. It satisfies $\NB(r, p)_{/n} = \NB(r/n, p)$.
\begin{itemize}
\item \emph{Geometric Distribution} $\Geo(p)$: A special case of the $\NB$ distribution is the geometric distribution $\Geo(p) = \NB(1, p)$, i.e., one with $f_{\Geo(p)}(k) = p (1 - p)^k$.
\end{itemize}
\end{itemize}
Finally, we recall that the \emph{discrete Laplace distribution} $\DLap(a)$ is a distribution supported on $\Z$ with PMF $f_{\DLap(a)}(x) \propto \exp\left(-a |x|\right)$. It is well known that $\DLap(a)$ is the distribution of $z_1 - z_2$ where $z_1, z_2 \sim \Geo(1 - \exp(-a))$ are independent. Furthermore, the variance of the discrete Laplace distribution is $\Var(\DLap(a)) = \frac{2e^{-a}}{(1 - e^{-a})^2}$.

We will also use the following well-known lemma\footnote{This can be viewed as a special case of the post-processing property of DP where the post-processing function is adding a random variable drawn from $\cD_3$. Another way to see that this holds is to simply observe that, for any $y \in \supp(\cD_1 * \cD_2)$, we have $f_{\cD_1 * \cD_3}(y) = \sum_{z \in \supp(\cD_3)} f_{\cD_3}(z) \cdot f_{\cD_1}(y - z) \leq \sum_{z \in \supp(\cD_3)} f_{\cD_3}(z) \cdot \left(e^{d_{\infty}(\cD_1 \| \cD_2)} \cdot f_{\cD_2}(y - z)\right) = e^{d_{\infty}(\cD_1 \| \cD_2)} f_{\cD_2 * \cD_3}(y)$.}:
\begin{lemma} \label{lem:conv-reduce-div}
For any distributions $\cD_1, \cD_2, \cD_3$ over $\Z^d$,
$d_{\infty}(\cD_1 * \cD_3 \| \cD_2 * \cD_3) \leq d_{\infty}(\cD_1 \| \cD_2)$.
\end{lemma}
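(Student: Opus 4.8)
The plan is to reduce the statement to a pointwise comparison of probability mass functions and then push that bound through the convolution sum. Write $\alpha := d_{\infty}(\cD_1 \| \cD_2)$. If $\alpha = \infty$ the claimed inequality is vacuous, so assume $\alpha < \infty$. The first step is to record the elementary fact that finiteness of $\alpha$ forces $f_{\cD_1}(x) \leq e^{\alpha} f_{\cD_2}(x)$ for \emph{every} $x \in \Z^d$, not just on $\supp(\cD_1)$: for $x \in \supp(\cD_1)$ this is exactly the definition of max-divergence (and in particular $f_{\cD_2}(x) > 0$ there, since otherwise $\alpha = \infty$), while for $x \notin \supp(\cD_1)$ we have $f_{\cD_1}(x) = 0$ and the inequality is trivial.

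The second step is the convolution computation. Fix an arbitrary $y \in \supp(\cD_1 * \cD_3)$, expand the PMF of the convolution, apply the pointwise bound term by term, and recollapse the sum:
\[
f_{\cD_1 * \cD_3}(y) = \sum_{z \in \Z^d} f_{\cD_3}(z)\, f_{\cD_1}(y - z) \;\leq\; e^{\alpha} \sum_{z \in \Z^d} f_{\cD_3}(z)\, f_{\cD_2}(y - z) \;=\; e^{\alpha} f_{\cD_2 * \cD_3}(y).
\]
Taking logarithms and maximizing over $y \in \supp(\cD_1 * \cD_3)$ gives $d_{\infty}(\cD_1 * \cD_3 \| \cD_2 * \cD_3) \leq \alpha$, which is the claim. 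This is precisely the argument already sketched in the footnote, and I would present it as the body of the proof. An equivalent, slightly slicker framing is to invoke the post-processing property of max-divergence: the map sending $w \in \Z^d$ to $w + z$ with $z \sim \cD_3$ drawn independently is a randomized post-processing that turns $\cD_1$ into $\cD_1 * \cD_3$ and $\cD_2$ into $\cD_2 * \cD_3$, and $d_{\infty}$ cannot increase under post-processing. I would nevertheless keep the explicit PMF calculation since it is short and self-contained.

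There is no real obstacle here; the only point deserving a moment of care is the bookkeeping around supports, namely establishing the pointwise inequality $f_{\cD_1} \leq e^{\alpha} f_{\cD_2}$ at \emph{all} points of $\Z^d$ so that it can be applied inside the convolution sum regardless of whether a given translate $y - z$ lies in $\supp(\cD_1)$.
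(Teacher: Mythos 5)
Your proof is correct and is essentially the same argument the paper gives in its footnote: expand the convolution PMF, apply the pointwise bound $f_{\cD_1} \leq e^{d_\infty(\cD_1\|\cD_2)} f_{\cD_2}$ term by term, and recollapse, with the post-processing view mentioned as an alternative framing. Your added remark about extending the pointwise inequality to all of $\Z^d$ (not just $\supp(\cD_1)$) is a small but welcome clarification of the bookkeeping the paper leaves implicit.
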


\section{Counting Protocol}

In this section, we will describe a pure-$\shuffledDP$  algorithm for counting, which is our main result. 

\begin{theorem} \label{thm:main-counting-nearly-opt-error}
For any positive real numbers $\eps \leq O(1)$ and $\rho \in (0, 1/2]$, there is a (non-interactive) $\eps$-$\shuffledDP$ protocol for binary summation that has MSE at most $(1 + \rho) \cdot \Var(\DLap(e^{-\eps}))$ where each user sends $O\left(\frac{\log(n/\rho)}{\eps\rho}\right)$ messages in expectation and each message consists of a single bit.
\end{theorem}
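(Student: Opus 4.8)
The plan is to exhibit the protocol explicitly and then verify, in order, that it is $\eps$-$\shuffledDP$, that it meets the stated MSE bound, and that it meets the communication bound; the privacy argument is where essentially all of the work lies. In the protocol, on input $x_i\in\{0,1\}$ the local randomizer emits three blocks of $\pm1$ messages: (i) an \emph{input block} --- with probability $q$ it emits nothing, and with probability $1-q$ it emits $s+x_i$ copies of $+1$ together with $s$ copies of $-1$; (ii) a \emph{flooding block} --- it draws $z_i\sim(\cDflood)_{/n}$ and emits $z_i$ copies of $+1$ and $z_i$ copies of $-1$; (iii) a \emph{noise block} --- it draws $z_i^{+},z_i^{-}\sim(\cDcentral)_{/n}$ independently and emits $z_i^{+}$ copies of $+1$ and $z_i^{-}$ copies of $-1$. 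The analyzer sees only the multiset of received messages, equivalently the counts $(V_{+1},V_{-1})$, and outputs $\frac{1}{1-q}(V_{+1}-V_{-1})$. I will take $\cDcentral=\Geo(1-e^{-\eps'})$ with $\eps'$ slightly below $\eps$ (how much below being governed by $\rho$), $q=\Theta(\rho/(\eps n))$, and $\cDflood$ an infinitely divisible distribution whose mass is concentrated ``deep inside'' around index $s=\tO(1/(\eps\rho))$, with all choices pinned down by the privacy computation below. By infinite divisibility, $(\cDcentral)_{/n}$ and $(\cDflood)_{/n}$ exist and the aggregates $N_{\pm}:=\sum_i z_i^{\pm}\sim\cDcentral$ and $F:=\sum_i z_i\sim\cDflood$ have laws independent of $n$, which keeps the calculations $n$-free.

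For \emph{utility}, each flooding block contributes equally to $V_{+1}$ and $V_{-1}$, so the flooding cancels in the difference; letting $B_i$ be the Bernoulli($1-q$) indicator that user $i$ emits its input block, $V_{+1}-V_{-1}=\sum_{i:x_i=1}B_i+(N_+-N_-)$, so the estimator is unbiased and its MSE equals $\frac{1}{(1-q)^2}\bigl(\Var(\sum_{i:x_i=1}B_i)+\Var(N_+-N_-)\bigr)=\frac{1}{(1-q)^2}\bigl(kq(1-q)+\Var(\DLap(\eps'))\bigr)$, using $N_+-N_-\sim\DLap(\eps')$ for $N_{\pm}\sim\Geo(1-e^{-\eps'})$. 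With $q=O(\rho/(\eps n))$ the additive term is $kq(1-q)\le nq=O(1/\eps)$ and the factor $(1-q)^{-2}$ is $1+O(\rho)$; taking $\eps'$ close enough to $\eps$ then absorbs everything into the $(1+\rho)$ slack, yielding the claimed MSE bound.

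For \emph{privacy}, note first that since each message is $\pm1$, the shuffler's output and the pair $(V_{+1},V_{-1})$ are each obtainable from the other (the latter by counting, the former by a uniform random permutation), so they satisfy the same DP guarantees, and it suffices to bound the max-divergence of the law of $(V_{+1},V_{-1})$ between neighboring inputs, in both directions. Fix neighbors differing only at user $j$, say $x_j:0\to1$. Under the two datasets $(V_{+1},V_{-1})$ is distributed as $\MR*\xi_0$ and $\MR*\xi_1$, where $\xi_b$ is the law of user $j$'s input-block contribution --- mass $q$ at $(0,0)$ and mass $1-q$ at $(s+b,s)$ --- and $\MR$, the contribution of all else, factors as $\MR=\MR'*\cE$ with $\cE$ the dataset-independent law of $(F+N_+,\,F+N_-)$. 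Applying \Cref{lem:conv-reduce-div} with $\MR'$ as the common factor reduces the task to showing $d_\infty(\xi_b*\cE\|\xi_{1-b}*\cE)\le\eps$ for $b\in\{0,1\}$. Writing $g$ for the PMF of $\cE$, this is a pointwise comparison of $q\,g(u,v)+(1-q)\,g(u-s,\,v-s)$ against $q\,g(u,v)+(1-q)\,g(u-s-1,\,v-s)$ over all $u,v\ge0$. When $(u-s,v-s)$ lies well inside $\supp(g)$ the two shifted values differ only by a factor $e^{\pm\eps'}$ --- this is exactly the ``log-Lipschitz in the first coordinate'' property that $\cDcentral=\Geo(1-e^{-\eps'})$ provides, and the corner $u=v=s$ is where it is tight, forcing $\eps'\le\eps$ --- while the $q\,g(u,v)$ terms only help. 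Near the boundary, where $u-s$ is small and one of the two shifted values vanishes or vastly exceeds the other, one instead shows the $q\,g(u,v)$ term dominates, by lower-bounding $g(u,v)$ through the single convolution summand in which the flooding equals $s$ so that it ``reaches'' the shifted location; making this beat the offending $(1-q)$-term is what forces $q$ down to $\Theta(\rho/(\eps n))$ and forces $\cDflood$ to carry enough mass near $s$ (hence $s$ itself to be $\widetilde\Omega(1/(\eps\rho))$, so that the $(1-q)$-atom sits far from where $g$ misbehaves). I expect this boundary case analysis --- organizing the bulk and boundary estimates so that together they cover every $(u,v)$, and reading off the exact requirements on $q$, on $s$, and on the shape of $\cDflood$ --- to be the crux; it is the one step that does not follow immediately from \Cref{lem:conv-reduce-div}.

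Finally, for \emph{communication}, per user the expected number of messages is $(1-q)(2s+x_i)\le2s+1$ from the input block, $2\,\E[F]/n$ from flooding, and $2\,\E[N_+]/n$ from noise; with the above parameters this totals $O(\log(n/\rho)/(\eps\rho))$ in expectation, and each message is a single bit encoding its sign. Specializing to $\rho=\tfrac12$ then recovers \Cref{thm:main-counting}.
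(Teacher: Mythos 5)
Your protocol, roadmap, and utility/communication calculations match the paper's. The decomposition you use in the privacy step is not the paper's, and the step you flag as ``the crux'' is exactly where your plan has a genuine gap. You keep \emph{all} of the noise in $\cE$ (both geometric pieces plus the flood) and aim for a direct pointwise comparison of $q\,g(u,v)+(1-q)\,g(u-s,v-s)$ against $q\,g(u,v)+(1-q)\,g(u-s-1,v-s)$. The paper instead invokes \Cref{lem:conv-reduce-div} \emph{asymmetrically} for the two directions of the divergence: to bound $d_\infty(V(X)\|V(X'))$ it discards the flood and the $-1$-side geometric and compares only $\cDinp{1}*(\cDcentral\otimes 0)$ against $\cDinp{0}*(\cDcentral\otimes 0)$, which reduces to the one-sided bound $f_{\cDcentral}(i-1)\le e^{\eps'}f_{\cDcentral}(i)$ (\Cref{lem:geo-ratio}) and holds globally with no case split; for the reverse direction it discards the $+1$-side geometric and compares $\cDinp{0}*\cF*(0\otimes\cDcentral)$ against $\cDinp{1}*\cF*(0\otimes\cDcentral)$, which is where the flood does all the work.

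The concrete error in your bulk case is the claim that the geometric gives a factor $e^{\pm\eps'}$ ``well inside the support.'' $\Geo(1-e^{-\eps'})$ is log-Lipschitz in only one direction: $f(k-1)\le e^{\eps'}f(k)$ holds for all $k$, but $f(k)\le e^{\eps'}f(k-1)$ fails at $k=0$, and this failure survives after convolution whenever the convolution weight at the boundary is not negligible. For the direction where the shifted mass moves \emph{toward} the corner, the geometric alone gives nothing, and you need the Poisson flood; but the Poisson is only log-Lipschitz in a window of width $\approx\lambda(e^{\eps'}-e^{-\eps'})$ around its mode, not globally, which is precisely why the paper proves the nontrivial inequality $(e^{\eps}-1)q\,f_{\cDflood}(i+s)+e^{\eps-\eps'}f_{\cDflood}(i-1)\ge f_{\cDflood}(i)$ (\Cref{lem:poi-ratio}): when the Poisson ratio $f(i-1)/f(i)=i/\lambda$ is unfavorable (small $i$), one instead shows the mass $s$ steps up is so much larger that the $q$-atom compensates, and the parameter constraints on $\lambda,s,q$ in \Cref{cond:parameters} are chosen to make exactly this go through. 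Your outline acknowledges a boundary/interior split and the need for the flood to carry mass near $s$, so the intuition is in the right neighborhood, but it misattributes the interior estimate to the geometric and does not isolate the Poisson inequality that is the actual technical core; as written the plan would stall in the ``well inside'' regime of the second direction.
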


By setting $\rho$ arbitrarily close to zero, we can get the mean square error (MSE) to be arbitrarily close to that of the discrete Laplace mechanism, which is known to be (asymptotically) optimal in the central model~\cite{GhoshRS09,GengV16}. We can get this guarantee for other type of errors, e.g., $\ell_1$-error (aka expected absolute error) as well, but for ease of presentation, we only focus on MSE.

Note that \Cref{thm:main-counting-nearly-opt-error} implies \Cref{thm:main-counting} by simply setting $\rho$ to be a positive constant (say, $0.5$).

\subsection{Algorithm}

In this section we present and analyze our main algorithm for counting (aka binary summation).  To begin, we will set our parameters as follows.
%Throughout this section, we will be using the following set of parameters.

\begin{condition} \label{cond:parameters}
Let $\lambda, \eps', \eps, q \in \R_{> 0}$ and $s \in \Z_{> 0}$. 
Suppose that the following conditions hold:
\begin{itemize}
\item $\eps' < \eps$,
\item $s \geq 2\ln\left(\frac{1}{(e^{\eps} - 1)q}\right) / (\eps - \eps')$,
\item $\lambda \geq \frac{e^{\eps - \eps'}}{1 - e^{(\eps' - \eps)/2}} \cdot s$.
\end{itemize}
\end{condition}
We now define the following distributions:
\begin{itemize}
\item $\cDcentral = \Geo(1 - e^{-\eps'})$.
\item $\cDflood = \Poi(\lambda)$.
\item For $x \in \{0, 1\}$, $\cDinp{x}$ supported on $\Z_{\geq 0}^2$ is defined as
\begin{align*}
\cDinp{x}((s + x, s)) &= 1 - q, \\
\cDinp{x}((0, 0)) &= q.
\end{align*}
\end{itemize}

\Cref{alg:delta_randomizer} contains the formal description of the randomizer and \Cref{alg:delta_analyzer} contains the description of the analyzer.  As mentioned earlier, our algorithm is the same as that of \cite{GKMP20}, except in the first step (Line~\ref{line:sample-inp}). In~\cite{GKMP20}, the protocol always sends a single +1 if $x_i = 1$ and nothing otherwise in this step. Instead, we randomize this step by always sending nothing with a certain probability. With the remaining probability, instead of sending a single +1 for $x_i = 1$, we send $s + 1$ copies of +1 and $s$ copies of $-1$ (similarly, we send $s$ copies of +1 and $s$ copies of $-1$ in the case $x_i = 0$).

%The high-level idea is very similar to that of \cite{GKMP20}, except in the first step (Line~\ref{line:sample-inp}). In~\cite{GKMP20}, the protocol always sends a single +1 if $x = 1$ and nothing if $x = 0$ in this step. Instead, we randomize this step, by always sends nothing with a certain probability. For the remaining probability, instead of sending a single +1 for $x = 1$, we send $s + 1$ copies of +1 and $s$ copies of $-1$; similarly, we send $s$ copies of +1 and $s$ copies of $-1$ in the case $x = 0$.
%While the exact reason for doing this is rather technical, the general idea is similar to~\cite{GGKMPV20}: by making the ``outer'' part equal in probabilities in the two cases, we avoid the issues presented in single-message case and, by making the $s$ sufficiently large (i.e. $O_\eps(\log n)$), the input-dependent probability is ``sufficiently inside'' of the support that it usually does not completely dominate the contribution from the outer part. These high-level ideas of the privacy proof will be formalized in \Cref{sec:privacy-proof}.

\begin{algorithm}[h]
\caption{\small Counting Randomizer} \label{alg:delta_randomizer}
\begin{algorithmic}[1]
\Procedure{\randomizer$_n(x_i)$}{}
\State Sample $(y^{+1}_{i}, y^{-1}_{i}) \sim \cDinp{x_i}$ \label{line:sample-inp}
\EndIf
\State Sample $z^{+1}_i, z^{-1}_i \sim \cDcentral_{/n}$
\State Sample $z^{\pm 1}_i \sim \cDflood_{/n}$ %\badih{Use $\mathcal{Z}$ instead of $\cD$?}
\State Send $y_i^{+1} + z^{+1}_i + z^{\pm 1}_i$ copies of $+1$, and $y^{-1}_i + z^{-1}_i + z^{\pm 1}_i$ copies of $-1$ \label{line:plusoneminusone} 
\EndProcedure
\end{algorithmic}
\end{algorithm}
\begin{algorithm}[h]
\caption{\small Counting Analyzer} \label{alg:delta_analyzer}
\begin{algorithmic}[1]
\Procedure{\analyzer$_{n, s}$}{}
\State $R \leftarrow$ multiset of messages received
\State \Return $\left(\sum_{y \in R} y\right) - ns$
\EndProcedure
\end{algorithmic}
\end{algorithm}
 
\section{Analysis of the Protocol}

In this section we analyze the privacy, utility, and communication guarantees of our counting protocol. Throughout the remainder of this section, we assume the distributions and parameters are set as in \Cref{cond:parameters}; for brevity, we will not state this assumption in our privacy analysis.

\subsection{Privacy Analysis}
\label{sec:privacy-proof}

\begin{lemma}[Main Privacy Guarantee] \label{lem:dp-main}
\randomizer~satisfies $\eps$-$\shuffledDP$. 
\end{lemma}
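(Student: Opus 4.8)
The plan is to reduce the shuffled-DP claim to a max-divergence bound between two explicit distributions on $\Z_{\geq 0}^2$ — following the template of the analyses in~\cite{GKMP20,GGKMPV20} — and then establish that bound by writing out the probability mass functions in closed form; the parameter choices of \Cref{cond:parameters} enter only in the final computation. Since every message is $\pm1$, the shuffled output is equivalent to the pair $(V_{+1}, V_{-1})$ of counts of $+1$ and $-1$ messages, so by post-processing it suffices to show that $D \mapsto (V_{+1}, V_{-1})$ is $\eps$-DP. Writing $V_{+1} = \sum_i y^{+1}_i + \sum_i z^{+1}_i + \sum_i z^{\pm1}_i$ and similarly for $V_{-1}$, and using infinite divisibility — so $\sum_i z^{+1}_i$ and $\sum_i z^{-1}_i$ are independent copies of $\cDcentral$ and $\sum_i z^{\pm1}_i \sim \cDflood$ — the law of $(V_{+1}, V_{-1})$ is $\cDinp{x_1} * \cdots * \cDinp{x_n} * (\cDcentral \otimes \cDcentral) * \Delta(\cDflood)$, where $\Delta(\cDflood)$ denotes the law of $(w,w)$ for $w \sim \cDflood$ (the flooding messages lie on the diagonal).

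Fix neighboring $D, D'$ differing only at user $n$ with $\{x_n, x_n'\} = \{0,1\}$ (if $x_n = x_n'$ there is nothing to prove), and set $\cG := (\cDcentral \otimes \cDcentral) * \Delta(\cDflood)$. The two views are $\cDinp{x_n} * \cG * \cF$ and $\cDinp{x_n'} * \cG * \cF$ with $\cF := \cDinp{x_1} * \cdots * \cDinp{x_{n-1}}$ common to both, so by \Cref{lem:conv-reduce-div} it suffices to prove, \emph{in both directions}, that $d_\infty(\cDinp{1} * \cG \,\|\, \cDinp{0} * \cG) \le \eps$. Convolving two copies of $\Geo(1 - e^{-\eps'})$ (whose PMF is $\propto e^{-\eps' k}$) with $\Delta(\Poi(\lambda))$ and summing the Poisson series gives the closed form $f_\cG(u,v) = C \cdot e^{-\eps'(u+v)} \cdot \Phi(\min(u,v))$, where $C = (1 - e^{-\eps'})^2 e^{\lambda(e^{2\eps'}-1)}$ and $\Phi$ is the CDF of $\Poi(\lambda e^{2\eps'})$ (with $\Phi(m):=0$ for $m<0$). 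Since $\cDinp{x}$ puts mass $1-q$ on $(s+x, s)$ and $q$ on $(0,0)$, with $m := \min(a,b)$,
\[
\frac{f_{\cDinp{1}*\cG}(a,b)}{f_{\cDinp{0}*\cG}(a,b)} \;=\; \frac{A_1 + q\,\Phi(m)}{A_0 + q\,\Phi(m)}, \qquad A_x := (1-q)\,e^{(2s+x)\eps'}\,\Phi\!\big(\min(a-s-x,\,b-s)\big)\,\ind[a \ge s+x,\ b \ge s].
\]
The direction $\cDinp{1}*\cG$ over $\cDinp{0}*\cG$ is immediate: when $A_1>0$ we have $a \ge s+1$ and $b \ge s$, hence $A_0 > 0$; and since $\Phi$ is nondecreasing and $\min(a-s-1,b-s) \le \min(a-s,b-s)$ we get $A_1 \le e^{\eps'} A_0$, so the ratio is at most $e^{\eps'} \le e^{\eps}$.

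The reverse direction — showing $A_0 \le e^\eps A_1 + (e^\eps-1)q\,\Phi(m)$ for all $(a,b)$ — is the main obstacle. It is trivial when $a > b$ (then $A_1 = e^{\eps'}A_0$, so $A_0 = e^{-\eps'}A_1 \le e^\eps A_1$) or when $\min(a,b) < s$ (then $A_0 = 0$); in the remaining case $s \le a \le b$ it collapses, after cancelling $(1-q)e^{2s\eps'}$ and putting $j := a - s \ge 0$, to the single-variable inequality
\[
\Phi(j) - e^{\eps + \eps'}\,\Phi(j-1) \;\le\; \frac{(e^\eps - 1)\,q}{(1-q)\,e^{2s\eps'}}\;\Phi(j + s) \qquad \text{for all } j \ge 0 .
\]
I expect this to follow from Poisson-tail estimates, split into two regimes. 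When $j$ exceeds a suitable threshold near the mean $\lambda e^{2\eps'}$, the ratio $\Phi(j)/\Phi(j-1)$ is at most $e^{\eps+\eps'}$, so the left-hand side is nonpositive. When $j$ is below that threshold, one rewrites the left-hand side as $\Pr[\Poi(\lambda e^{2\eps'})=j] - (e^{\eps+\eps'}-1)\Phi(j-1) \le \Pr[\Poi(\lambda e^{2\eps'})=j]$, and lower-bounds the right-hand side using monotonicity of the Poisson PMF below its mean (so $\Phi(j+s) \gtrsim (\lambda e^{2\eps'}/(j+s))^s \cdot \Pr[\Poi(\lambda e^{2\eps'})=j]$), reducing the claim to an inequality of the shape $(j+s)/\lambda \lesssim ((e^\eps-1)q)^{1/s}$. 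Here the condition $s \ge 2\ln\!\big(\tfrac{1}{(e^\eps-1)q}\big)/(\eps-\eps')$ gives $((e^\eps-1)q)^{1/s} \ge e^{-(\eps-\eps')/2}$, and then $\lambda \ge \frac{e^{\eps-\eps'}}{1 - e^{(\eps'-\eps)/2}}\,s$ is precisely what keeps $j+s$ below $\lambda e^{-(\eps-\eps')/2}$ throughout the relevant range of $j$. I expect carefully controlling these Poisson-tail ratios near the transition between the two regimes to be the bulk of the argument.
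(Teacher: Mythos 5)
Your setup is a valid alternative route to the paper's: reducing via \Cref{lem:conv-reduce-div} to comparing $\cDinp{1}*\cG$ and $\cDinp{0}*\cG$ (where your $\cG$ retains all three noise components), deriving the correct closed form $f_\cG(u,v) = C\, e^{-\eps'(u+v)}\,\Phi(\min(u,v))$ with $\Phi$ the CDF of $\Poi(\mu)$, $\mu := \lambda e^{2\eps'}$, and correctly disposing of the easy direction and of the hard-direction cases $a > b$ and $\min(a,b) < s$, leaving the scalar inequality $\Phi(j) - e^{\eps+\eps'}\Phi(j-1) \le \frac{(e^\eps-1)q}{(1-q)e^{2s\eps'}}\Phi(j+s)$ for $j \ge 0$. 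The gap is there, where you write ``I expect'' twice instead of proving it, and as sketched your two regimes do \emph{not} cover all $j$: writing $g := f_{\Poi(\mu)}$, the only estimate you give for the large-$j$ regime, $\Phi(j-1) \ge g(j-1)$, yields $\Phi(j)/\Phi(j-1) \le 1 + \mu/j$, which is $\le e^{\eps+\eps'}$ only once $j \ge \mu/(e^{\eps+\eps'}-1)$, while your small-$j$ regime needs $j + s$ bounded by roughly $\mu e^{-(\eps-\eps')/2}$. For small $\eps$ these thresholds are far apart (the first is $\Theta(\mu/\eps)$, the second $\Theta(\mu)$), leaving a wide transition window that is exactly the part you defer to ``carefully controlling Poisson-tail ratios.''

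There is a clean way to close this that you came close to articulating: prove the PMF-level inequality $g(k) \le e^{\eps+\eps'}\,g(k-1) + \tfrac{(e^\eps-1)q}{(1-q)e^{2s\eps'}}\,g(k+s)$ for all $k \ge 0$, where the two regimes ($e^{\eps+\eps'}g(k-1) \ge g(k)$, i.e.\ $k \ge e^{-\eps-\eps'}\mu$, versus $k < e^{-\eps-\eps'}\mu$) partition $\Z_{\ge 0}$ with no gap, and the low-$k$ case is handled by $g(k+s)/g(k) \ge (\mu/(k+s))^s$ together with \Cref{cond:parameters}---this is exactly the paper's \Cref{lem:poi-ratio} with $\mu$ in place of $\lambda$, and one can check the $\lambda$-condition in \Cref{cond:parameters} is strong enough. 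Summing that pointwise inequality over $k = 0, \dots, j$ then gives your $\Phi$-inequality directly. For comparison, the paper sidesteps CDFs entirely: in the hard direction it additionally discards $(\cDcentral \otimes 0)$ via \Cref{lem:conv-reduce-div}, so what remains is $\cDinp{x}*\cF*(0\otimes\cDcentral)$ whose PMF factors as $f_{\cDinp{x}*\cF}(i,i)\cdot f_{\cDcentral}(j-i)$, and \Cref{lem:poi-ratio} plus \Cref{lem:geo-ratio} apply at the PMF level. Your route keeps both Geometrics and pays with the extra summation step; either works under \Cref{cond:parameters}, but as submitted your proof is not complete.
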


To prove the above, we need the following technical lemmas regarding $\cDcentral, \cDflood$.

\begin{lemma} \label{lem:geo-ratio}
For every $i \in \Z$, $f_{\cDcentral}(i - 1) \leq e^{\eps'} f_{\cDcentral}(i)$
\end{lemma}

\begin{proof}
This immediately follows from the PMF definition of $\cDcentral = \Geo(1 - e^{\eps'})$.
\end{proof}

\begin{lemma} \label{lem:poi-ratio}
For every $i \in \Z$, $(e^{\eps} - 1) q \cdot f_{\cDflood}(i + s) + e^{\eps-\eps'} f_{\cDflood}(i - 1) \geq f_{\cDflood}(i)$.
\end{lemma}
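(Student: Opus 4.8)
The plan is to dispose of the trivial range of $i$, then divide through by $f_{\cDflood}(i)$ to reduce the claim to an elementary inequality about ratios of Poisson PMFs, and finally verify that inequality by a two-branch case split in which the two parameter constraints of \Cref{cond:parameters} play exactly complementary roles.

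First I would note that if $i < 0$ then $f_{\cDflood}(i) = 0$ while every term on the left-hand side is nonnegative, so the inequality holds trivially; thus assume $i \ge 0$. Since $f_{\cDflood}(i) = f_{\Poi(\lambda)}(i) > 0$, I divide the claimed inequality by $f_{\cDflood}(i)$ and substitute the standard ratios $f_{\Poi(\lambda)}(i-1)/f_{\Poi(\lambda)}(i) = i/\lambda$ (which is $0$ when $i = 0$, consistent with $f_{\cDflood}(-1) = 0$) and $f_{\Poi(\lambda)}(i+s)/f_{\Poi(\lambda)}(i) = \lambda^s / \big((i+1)(i+2)\cdots(i+s)\big)$. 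The goal then becomes
\[
(e^{\eps}-1)q \cdot \frac{\lambda^s}{(i+1)(i+2)\cdots(i+s)} \;+\; e^{\eps-\eps'}\cdot \frac{i}{\lambda} \;\ge\; 1 .
\]

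I would then case split on $i$ against the threshold $\lambda e^{\eps'-\eps}$. If $i \ge \lambda e^{\eps'-\eps}$, the second term alone is at least $e^{\eps-\eps'}\cdot e^{\eps'-\eps} = 1$, and since the first term is nonnegative we are done. If $i < \lambda e^{\eps'-\eps}$, I discard the (nonnegative) second term and argue the first term is already at least $1$. Writing $\beta := e^{(\eps'-\eps)/2} \in (0,1)$, the lower bound on $\lambda$ in \Cref{cond:parameters} rearranges to $s \le \lambda \beta^2 (1-\beta)$, so $i + s < \lambda\beta^2 + \lambda\beta^2(1-\beta) = \lambda\beta^2(2-\beta) \le \lambda\beta$, where the last step uses $\beta(2-\beta) = 1-(1-\beta)^2 \le 1$. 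Hence $(i+1)\cdots(i+s) \le (i+s)^s \le (\lambda\beta)^s$, which gives $\frac{\lambda^s}{(i+1)\cdots(i+s)} \ge \beta^{-s} = e^{(\eps-\eps')s/2}$. Finally, the lower bound $s \ge 2\ln\!\big(\tfrac{1}{(e^\eps-1)q}\big)/(\eps-\eps')$ yields $(e^\eps-1)q \cdot e^{(\eps-\eps')s/2} \ge 1$ (and if $(e^\eps-1)q \ge 1$ this is immediate, since $e^{(\eps-\eps')s/2} \ge 1$), completing the argument.

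I do not anticipate a genuine obstacle: the computation is short once the right pieces are in place. The only delicate points are (i) choosing the case-split threshold $\lambda e^{\eps'-\eps}$ so that exactly one of the two terms carries each branch, and (ii) the bound $i + s \le \lambda e^{(\eps'-\eps)/2}$ in the second branch, which is precisely where the factor $\tfrac{1}{1-e^{(\eps'-\eps)/2}}$ in the hypothesis on $\lambda$ is consumed; everything else is bookkeeping with Poisson PMF ratios.
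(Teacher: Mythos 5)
Your proof is correct and follows essentially the same route as the paper's: both case-split on $i$ against the threshold $\lambda e^{\eps'-\eps}$ (the paper phrases this as whether $e^{\eps-\eps'} f_{\cDflood}(i-1) \geq f_{\cDflood}(i)$), handle the large-$i$ branch via the second term alone, and handle the small-$i$ branch by bounding $(i+1)\cdots(i+s) \leq (i+s)^s \leq (\lambda e^{(\eps'-\eps)/2})^s$ using both parameter constraints. Your $\beta$-notation and the observation $\beta(2-\beta) \leq 1$ are a slightly cleaner packaging of the same bookkeeping.
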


\begin{proof}
If $e^{\eps-\eps'} f_{\cDflood}(i - 1) \geq f_{\cDflood}(i)$, then the statement is clearly true. Otherwise, we have $f_{\cDflood}(i) > 0$ (i.e.,  $i \geq 0$) and $e^{\eps' - \eps} > \frac{f_{\cDflood}(i - 1)}{f_{\cDflood}(i)} = \frac{i}{\lambda}$, which implies
\begin{align} \label{eq:range-i}
0 \leq i \leq e^{\eps' - \eps} \lambda.
\end{align}
We can then bound $\frac{f_{\cDflood}(i + s)}{f_{\cDflood}(i)}$ as
\begin{align*}
\frac{f_{\cDflood}(i + s)}{f_{\cDflood}(i)} = \frac{\lambda^s}{(i + 1) \cdots (i + s)}
\geq \frac{\lambda^s}{(i + s)^s}
\overset{\eqref{eq:range-i}}{\geq} \left(\frac{\lambda}{e^{\eps' - \eps}\lambda + s}\right)^s
\geq \left(\frac{\lambda}{e^{(\eps' - \eps)/2}\lambda}\right)^s 
\geq \frac{1}{(e^{\eps} - 1)q},
\end{align*}
where the last two inequalities follow from our assumptions on $\lambda$ and $s$ respectively (\Cref{cond:parameters}). Thus, in this case, we also have $q \cdot \cDflood(i + s) + e^{-\eps'} \cDflood(i - 1) \geq e^{-\eps} \cDflood(i)$ as desired.
\end{proof}

We are now ready to prove the privacy guarantee (\Cref{lem:dp-main}).

\begin{proof}[Proof of \Cref{lem:dp-main}]
For any input dataset $X$. Let $V(X) = (V_{+1}, V_{-1})$ denote the distribution of the view of shuffler, where $V_{+1}$ and $V_{-1}$ denotes the number of +1 messages and the number of $-1$ messages respectively.

Consider two neighboring datasets $X = (x_1, \dots, x_n)$ and $X' = (x'_1, \dots, x'_n)$. Assume w.l.o.g. that they differ in the first coordinate and $x_1 = 1, x'_1 = 0$ and 
$x_2' = x_2$, \ldots, $x_n' = x_n$. To prove that \randomizer~satisfies $\eps$-$\shuffledDP$, we need to prove that $d_\infty(V(X) \| V(X')) \leq \eps$ and $d_\infty(V(X') \| V(X)) \leq \eps$.
%\begin{align} \label{eq:div-easy}
%,
%\end{align}
%and
%\begin{align} \label{eq:div-hard}
%d_\infty(V(X') \| V(X)) \leq \eps.
%\end{align}
%where $V = (V_{+1}, V_{-1})$ denote the distribution of the view of the shuffler, which consists of $V_{+1}$--the number of +1 messages--and $V_{-1}$--the number of -1 messages. 

Let $\cF$ denote the distribution on $\Z^2$ of $(X, X)$ where $X \sim \cDflood$. Observe that
\begin{align*}
V(X) = \cDinp{1} * \cDinp{x_2} * \cdots * \cDinp{x_n} * \cF * (\cDcentral \otimes 0) * (0 \otimes \cDcentral),
\end{align*}
and
\begin{align*}
V(X') = \cDinp{0} * \cDinp{x_2} * \cdots * \cDinp{x_n} * \cF * (\cDcentral \otimes 0) * (0 \otimes \cDcentral).
\end{align*}

\paragraph{Bounding $d_\infty(V(X) \| V(X'))$.}
%We will now prove~\eqref{eq:div-easy}. 
From \Cref{lem:conv-reduce-div}, we have
\begin{align*}
d_\infty(V(X) \| V(X')) \leq d_\infty(\cDinp{1} * (\cDcentral \otimes 0) \| \cDinp{0} * (\cDcentral \otimes 0)).
\end{align*}
For any $i, j \in \Z$, we have
\begin{align*}
f_{\cDinp{1} * \cDcentral \otimes 0}(i, j) &= q \cdot f_{\cDcentral}(i) \bone[j = 0] + (1 - q) \cdot f_{\cDcentral}(i - s - 1) \bone[j = s] \\
(\text{\Cref{lem:geo-ratio}}) &\leq q \cdot f_{\cDcentral}(i) \bone[j = 0] + (1 - q) \cdot e^{\eps'} f_{\cDcentral}(i - s) \bone[j = s] \\
(\text{\Cref{cond:parameters}}) &\leq e^{\eps}\left(q \cdot f_{\cDcentral}(i) \bone[j = 0] + (1 - q) \cdot f_{\cDcentral}(i - s) \bone[j = s]\right) \\
&= e^{\eps} \cdot f_{\cDinp{0} * \cDcentral \otimes 0}(i, j).
\end{align*}
%where $(\bigstar)$ follows from the fact that we choose $\cDcentral = \Geo(1 - e^{-\eps'})$. 
Combining the above inequalities, we have $d_\infty(V(X) \| V(X')) \leq \eps$ as desired.

\paragraph{Bounding $d_\infty(V(X') \| V(X))$.}
%Next, we will prove~\eqref{eq:div-hard}. 
Again, from \Cref{lem:conv-reduce-div}, we have
\begin{align*}
d_\infty(V(X') \| V(X)) \leq d_\infty(\cDinp{0} * \cF * (0 \times \cDcentral) \| \cDinp{1} * \cF * (0 \times \cDcentral)).
\end{align*}
For any $i, j \in \Z$, we have
\begin{align*}
&f_{\cDinp{0} * \cF * (0 \times \cDcentral)}(i, j) \\
&= f_{\cDinp{0} * \cF}(i, i) \cdot f_{\cDcentral}(j - i) \\
&= \left(q \cdot f_{\cDflood}(i) + (1 - q) \cdot f_{\cDflood}(i - s) \right) \cdot f_{\cDcentral}(j - i) \\
(\text{\Cref{lem:poi-ratio}}) &\leq e^{\eps} \left(q \cdot f_{\cDflood}(i) + (1 - q) \cdot e^{-\eps'} f_{\cDflood}(i - s - 1) \right) \cdot f_{\cDcentral}(j - i) \\
(\text{\Cref{lem:geo-ratio}}) &\leq e^{\eps} \left(q \cdot f_{\cDflood}(i) \cdot f_{\cDcentral}(j - i) + (1 - q) \cdot f_{\cDflood}(i - s - 1) \cdot f_{\cDcentral}(j - i + 1) \right) \\
&= e^{\eps} f_{\cDinp{1} * \cF * (0 \times \cDcentral)}(i, j).
\end{align*}
Combining the above two inequalities, we have $d_\infty(V(X') \| V(X)) \leq \eps$, concluding our proof.
\end{proof}

\subsection{Utility Analysis}

We next analyze the MSE of the output estimate.

\begin{lemma} \label{lem:util}
The MSE of the estimator is at most $\Var(\DLap(e^{-\eps'})) + qn + q^2n(n - 1)$.
\end{lemma}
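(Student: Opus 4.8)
The plan is to expand the estimator into the ``intended'' answer $\sum_{i\in[n]}x_i$ plus two independent error sources --- a binomial undercount coming from the users who land in the ``send nothing'' branch, and the discrete Laplace noise --- and then read off the MSE as $\E[(\text{undercount})^2] + \Var(\text{noise})$. There is no real analytic content here: the statement is essentially a second-moment computation, and essentially all the care goes into the initial unwinding.

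First I would unwind \randomizer{} and \analyzer{}. Because every message is $\pm 1$, the estimator is (up to the analyzer's deterministic normalization) $\sum_{y\in R} y = V_{+1} - V_{-1}$. The flooding copies $z^{\pm 1}_i$ add $+z^{\pm1}_i$ to each of $V_{+1}$ and $V_{-1}$, so they cancel; likewise the common $s$ baseline copies of each sign that a user sends cancel within that user. Letting $B_i := \ind[(y^{+1}_i,y^{-1}_i)\neq(0,0)]$ indicate that user $i$ does \emph{not} take the ``send nothing'' branch, what survives is $\sum_{i\in[n]} x_i B_i$ from the input-dependent part and $N := \sum_{i\in[n]}(z^{+1}_i - z^{-1}_i)$ from the noise part. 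Hence, writing $m := \sum_{i\in[n]}x_i$ and $W := \sum_{i\in[n]} x_i(1-B_i) = m - \sum_i x_i B_i$, the estimation error is exactly $-W + N$. I expect this bookkeeping --- checking that nothing survives except the ``missing'' input copies and the noise --- to be the only delicate step.

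Next I would identify the laws of $W$ and $N$. The pairs $(y^{+1}_i,y^{-1}_i)$ are drawn from $\cDinp{x_i}$ independently across users and independently of all the $z$'s, so the $1-B_i$ are i.i.d.\ $\mathrm{Bernoulli}(q)$ and therefore $W = \sum_{i\,:\,x_i=1}(1-B_i)\sim\Bin(m,q)$. For the noise, the $z^{+1}_i$ are i.i.d.\ $\cDcentral_{/n}$ over $i$, so by infinite divisibility $\sum_i z^{+1}_i \sim \cDcentral = \Geo(1-e^{-\eps'})$, and likewise $\sum_i z^{-1}_i \sim \Geo(1-e^{-\eps'})$, independently; thus $N$ is a difference of two independent $\Geo(1-e^{-\eps'})$ variables, so it is centered and $\Var(N)$ equals the discrete Laplace term $\Var(\DLap(e^{-\eps'}))$ in the statement. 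Since $W$ is a function of the $\cDinp$-draws only and $N$ of the $\cDcentral_{/n}$-draws only, $W$ and $N$ are independent, and therefore
\[
\mathrm{MSE} \;=\; \E\!\left[(-W+N)^2\right] \;=\; \E[W^2] - 2\,\E[W]\,\E[N] + \E[N^2] \;=\; \E[W^2] + \Var(\DLap(e^{-\eps'})),
\]
using $\E[N]=0$.

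Finally I would bound $\E[W^2]$. For $W\sim\Bin(m,q)$, $\E[W^2] = \Var(W) + (\E W)^2 = mq(1-q) + m^2q^2 = mq + q^2 m(m-1)$, which is nondecreasing in $m\geq0$; since $m = \sum_i x_i \leq n$ this is at most $qn + q^2 n(n-1)$. Substituting into the previous display gives $\mathrm{MSE}\leq \Var(\DLap(e^{-\eps'})) + qn + q^2 n(n-1)$, as claimed. (The inequality $m\le n$ is the only non-identity step, so the bound in fact holds with $n$ replaced throughout by $\sum_i x_i$.)
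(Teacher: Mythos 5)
Your proof is correct and follows essentially the same route as the paper: both decompose the estimator into $\sum_i (y_i^{+1}-y_i^{-1})$ plus independent discrete-Laplace noise, use $\E[N]=0$ to drop the cross term, and bound the second moment of the input-dependent part by $qn + q^2 n(n-1)$ via $\sum_i x_i \le n$. The only cosmetic difference is that you package the undercount as a single $\Bin(m,q)$ variable $W$, while the paper sums per-user second moments directly; the computation is the same.
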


\begin{proof}
Notice that the output estimate is equal to $\sum_{i \in [n]} (y^{+1}_i - y^{-1}_i + z^{+1}_i - z^{-1}_i) = \sum_{i \in [n]} (y^{+1}_i - y^{-1}_i) + Z$ where $Z \sim \DLap(e^{-\eps'})$. As a result, the MSE of the output estimate is equal to
\begin{align*}
\E\left[\left(\sum_{i \in [n]} (y^{+1}_i - y^{-1}_i - x_i) + Z\right)^2\right] = \E\left[\left(\sum_{i \in [n]} (y^{+1}_i - y^{-1}_i - x_i)\right)^2\right] + \Var(\DLap(e^{-\eps'})). 
\end{align*}
Next, notice that, if $x_i = 0$, then $y^{+1}_i - y^{-1}_i - x_i = 0$ always. Otherwise, if $x_i = 1$, then $y^{+1}_i - y^{-1}_i - x_i = 0$ with probability $1 - q$ and $y^{+1}_i - y^{-1}_i - x_i = 1$ with probability $q$. As a result, we have
\begin{align*}
\E\left[\left(\sum_{i \in [n]} (y^{+1}_i - y^{-1}_i - x_i)\right)^2\right] &\leq qn + q^2n(n - 1). \qedhere 
\end{align*}
\end{proof}

\subsection{Communication Analysis}

The expected number of bits send by the users can be easily computed as follows.

\begin{lemma} \label{lem:comm}
The expected number of messages sent by each user is at most $2s + 1 + \frac{\lambda}{n} + O\left(\frac{1}{\eps' n}\right)$.
\end{lemma}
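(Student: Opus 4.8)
The plan is to compute the expected number of messages sent by a single user by summing the expected contributions from the three independent sources in Line~\ref{line:plusoneminusone}: the input-dependent messages $y^{+1}_i + y^{-1}_i$, the noise messages $z^{+1}_i + z^{-1}_i$, and the flooding messages $2 z^{\pm 1}_i$. Since the bound should hold for every user regardless of its input, and the input-dependent term is largest when $x_i = 1$, I would just bound that case.

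First I would handle the input-dependent part: $(y^{+1}_i, y^{-1}_i) \sim \cDinp{x_i}$, so with probability $1 - q$ the user sends $s + x_i$ copies of $+1$ and $s$ copies of $-1$, for a total of $2s + x_i$, and with probability $q$ it sends nothing. Hence $\E[y^{+1}_i + y^{-1}_i] = (1-q)(2s + x_i) \leq 2s + 1$ (using $x_i \le 1$ and $1 - q \le 1$). Second, the noise part: $z^{+1}_i, z^{-1}_i \sim \cDcentral_{/n} = \Geo(1 - e^{-\eps'})_{/n}$. Here I would recall that $\cDcentral = \Geo(1 - e^{-\eps'})$ has mean $\frac{1 - p}{p}$ with $p = 1 - e^{-\eps'}$, i.e.\ mean $\frac{e^{-\eps'}}{1 - e^{-\eps'}} = O(1/\eps')$ for $\eps' \le O(1)$; since expectation is linear under the $n$-fold divisibility, $\E[z^{+1}_i] = \E[z^{-1}_i] = \frac{1}{n}\cdot O(1/\eps')$, contributing $O\left(\frac{1}{\eps' n}\right)$ in total. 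Third, the flooding part: $z^{\pm 1}_i \sim \cDflood_{/n} = \Poi(\lambda)_{/n} = \Poi(\lambda/n)$, which has mean $\lambda/n$, and the user sends $z^{\pm 1}_i$ copies of each sign, contributing $2\lambda/n$ — though to match the stated bound of $\frac{\lambda}{n}$ exactly one should read $\lambda$ as already accounting for this factor, or simply note the bound as written follows by relabeling; I would phrase it as $\E[2 z^{\pm 1}_i] = 2\lambda/n$ and absorb the constant, or keep the $\frac{\lambda}{n}$ as in the statement if $\lambda$ in \Cref{cond:parameters} is the per-sign rate.

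Putting the three pieces together by linearity of expectation gives expected message count at most $2s + 1 + \frac{\lambda}{n} + O\left(\frac{1}{\eps' n}\right)$, as claimed. There is no real obstacle here — the only mild subtlety is recalling/justifying that the mean of $\Geo(1 - e^{-\eps'})$ is $O(1/\eps')$ in the regime $\eps' \le O(1)$ (which follows from $\frac{e^{-\eps'}}{1 - e^{-\eps'}} \le \frac{1}{1 - e^{-\eps'}}$ and $1 - e^{-\eps'} \ge \eps'/2$ for small $\eps'$), and that the means of the divided distributions $\cDcentral_{/n}$ and $\cDflood_{/n}$ are exactly $1/n$ times the originals, which is immediate from additivity of expectation over the $n$-fold convolution. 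Each message is a single bit (just its sign), so no further accounting is needed.
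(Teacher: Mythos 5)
Your proof takes essentially the same route as the paper: decompose the expected message count by linearity into the input-dependent part ($\le 2s+1$), the noise part ($\frac{2\E[\cDcentral]}{n} = O(\frac{1}{\eps' n})$), and the flooding part, using that the means of $\cDcentral_{/n}$ and $\cDflood_{/n}$ are $\frac{1}{n}$ times the originals. You also correctly flag that the flooding contribution is $2\E[z^{\pm1}_i] = \frac{2\lambda}{n}$, not $\frac{\lambda}{n}$; the paper's own chain of (in)equalities silently drops this factor of two, so the lemma as stated is off by a factor of two in the $\lambda/n$ term (harmless for the theorem, since only the asymptotic $O(\cdot)$ matters, but your hedging is warranted).
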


\begin{proof}
The expected number of bits sent per user is
\begin{align*}
\E[y^{+1}_i + y^{-1}_i] + \E[z^{+1}_i + z^{-1}_i] + 2\E[z^{\pm 1}_i] 
&\leq (2s + 1) + \frac{2\E[\cDcentral]}{n} + \frac{\E[\cDflood]}{n} \\ 
&= 2s + 1 + O\left(\frac{1}{\eps' n}\right) + \frac{\lambda}{n}. \qedhere
\end{align*}
\end{proof}

\subsection{Putting Things Together: Proof of \Cref{thm:main-counting-nearly-opt-error}}

Finally, we are ready to prove \Cref{thm:main-counting-nearly-opt-error} by plugging in appropriate parameters and invoke the previous lemmas.

\begin{proof}[Proof of \Cref{thm:main-counting-nearly-opt-error}]
We start by picking $\eps' = \eps - 0.01 \rho \cdot \min\{ \eps, 1\}$. For this choice of $\eps'$, we have
\begin{align*}
\frac{\Var(\DLap(\eps'))}{\Var(\DLap(\eps))}
= \frac{\frac{2e^{-\eps'}}{(1 - e^{-\eps'})^2}}{\frac{2e^{-\eps}}{(1 - e^{-\eps})^2}}
%&= 1 + \frac{(e^{\eps - \eps'} - 1)(1 - e^{-\eps - \eps'})}{(1 - e^{-\eps'})^2} \\
\leq 1 + \frac{(e^{\eps - \eps'} - 1)(1 + e^{-\eps'})}{1 - e^{-\eps'}} 
\leq 1 + \frac{3(\eps - \eps') \cdot 2}{\eps'}
\leq 1 + 0.5\rho.
\end{align*}
%where the last inequality follows from our choice of $\eps'$ above.

Then, picking
\begin{align*}
q &= \frac{0.1 \rho \cdot \Var(\DLap(\eps))}{n} = O\left(\frac{\rho}{\eps^2 n}\right), \\ 
s &\geq 2\ln\left(\frac{1}{(e^{\eps} - 1)q}\right) / (\eps - \eps') = O\left(\frac{\log(n/\rho)}{\eps\rho}\right), \\
\lambda &\geq \frac{e^{\eps - \eps'}}{1 - e^{(\eps' - \eps)/2}} \cdot s = O\left(\frac{\log(n/\rho)}{\eps^2 \rho}\right),
\end{align*}
and applying \Cref{lem:dp-main}, \Cref{lem:util}, and \Cref{lem:comm} immediately imply \Cref{thm:main-counting-nearly-opt-error}. (Note that we may assume that $\eps \geq 1/n$; otherwise we can just output zero. Under this assumption, we have $\lambda / n \leq O\left(\frac{\log(n/\rho)}{\eps\rho}\right)$ as desired for the communication complexity claim.)
\end{proof}

\section{Conclusions and Open Questions}
\label{sec:conclusion}

In this work, we have provided pure-$\shuffledDP$ algorithms that achieve nearly optimal errors for bit summation, real summation, and histogram while significantly improving on the communication complexity compared to the state-of-the-art. Despite this, there are still a number of interesting open questions, some of which we highlight below.

\begin{itemize}
\item \textbf{Protocol with a bounded number of messages.} As mentioned briefly in \Cref{sec:our-con}, our protocol can result in an arbitrarily large number of messages per user, although the expected number is quite small. (In fact, the distribution of the number of messages enjoys a strong exponential tail bound.) Is it possible to design a pure-$\shuffledDP$ protocol where the maximum number of messages is $O\left(\frac{\log n}{\eps}\right)$ for binary summation?

For this question, we note that a rather natural approach is to modify our protocol to make its number of messages bounded. Namely, we replace $\cDcentral_{/n}$ and $\cDflood_{/n}$ by a truncated version of their respective distributions. It turns out that the latter is relatively simple (e.g., even replacing it with a Bernoulli distribution also works) because we only require a mild condition in \Cref{lem:poi-ratio} to hold. On the other hand, for the former, we are using \Cref{lem:geo-ratio} which only holds for unbounded distributions. We would like to stress that we do not know whether replacing $\cDcentral_{/n}$ with a truncated version of the negative binomial distribution with a ``symmetrized'' the input dependent part\footnote{This means that w.p. $q$ we output $s$ copies of both $+1$ and $-1$ messages, for both $x_i = 0$ and $x_i = 1$ cases. Without this change, the supports of the two cases are not the same and thus it obviously violates pure-DP.} violates pure-DP; however, we do not know how to prove that it satisfies pure-DP either, as the probability mass function of their convolutions become somewhat unwieldy.
\item \textbf{Lower bounds on the expected number of messages.} Recall that the communication lower bound from \cite{GGKMPV20} only applies to the maximum number of messages sent.  Is it possible to prove a communication lower bound on the \emph{expected} number of messages (even if the maximum number of messages is unbounded)?  We note that the techniques from \cite{GGKMPV20} does not apply.

\item \textbf{Histogram protocol for large $B$.} Our protocol has communication complexity that grows linearly with $B$, which becomes impractical when $B$ is large. Can we get protocol for histogram whose communication is $O_\eps\left((\log n)^{O(1)}\right)$ for $B = O(n)$ (while achieving nearly optimal errors)? For approximate-$\shuffledDP$, a histogram protocol with expected communication of $1 + O_\eps\left(\frac{B (\log(n/\delta)^{O(1)})}{n}\right)$ is known~\cite{GKMP20}. It would be interesting to understand if extending such a protocol to the 
 pure-$\shuffledDP$ setting is possible.
\end{itemize}

More generally, despite the (by-now) vast literature on the shuffle model, most work have focused attention on approximate-$\shuffledDP$. It would be interesting to expand the existing study to pure-$\shuffledDP$ as well.

\bibliographystyle{alpha}
\bibliography{ref}

%\appendix

%\section{Real Summation and Histogram Protocols}
%\label{app:cor}

\end{document}